\newenvironment{block}[1][t]
  {\begin{array}[#1]{@{}l@{}}}
  {\end{array}}
\definecolor{lightgray}{gray}{0.90}
\newcommand{\Gbox}[1]{\colorbox{lightgray}{$#1$}}
\reservestyle{\oblang}{\mathsf}
\newcommand{\Int}{\mathsf{Int}}
\newcommand{\String}{\mathsf{String}}
\newcommand{\Bool}{\mathsf{Bool}}
\newcommand{\StringLen}{\mathsf{StringLen}}
\newcommand{\union}{\cup}
\newcommand{\ie}{\emph{i.e.}\xspace}
\newcommand{\eg}{\emph{e.g.}\xspace}
\newcommand{\ltop}{\textsf{H}}
\newcommand{\lbot}{\textsf{L}}
\definecolor{lightred}{RGB}{255,100,100}
\newcommand{\icode}[1]{$\mathsf{#1}$}
\newcommand\defas[0]{\stackrel{\triangle}{=}}
\newcommand{\stypetop}[1]{{#1}_{\ltop}}
\newcommand{\stypebot}[1]{{#1}_{\lbot}}
\newcommand{\stype}[2]{{#1\triangleleft#2}}
\newcommand\ssubst[3]{#1\left[#2 / #3\right]}
\newcommand{\reduce}{\longmapsto}
\newcommand{\stypeof}[3]{#1 \vdash_{1} #2 : #3}
\newcommand{\mtype}[2]{#1 \rightarrow #2}
\newcommand{\DeltaX}{\Delta}
\newcommand\wf[2]{#1 \models #2}
\newcommand{\primt}{P}
\newcommand{\primb}{{\mathsf{p}}}
\newcommand{\unionrel}[1]{{\text{Rel}\left[#1\right]}}
\newcommand{\downreln}[1]{\left\lfloor R\right\rfloor_{n}}
\newcommand{\rhosem}[1]{\rho_\mathsf{R}(#1)}
\newcommand{\gsetvx}[2]{{\mathcal{V}\llbracket#1\rrbracket}{#2}}
\newcommand{\gsetc}[1]{\gsetcx{#1}{\rho}}
\newcommand{\gsetcx}[2]{{\mathcal{C}\llbracket#1\rrbracket}{#2}}
\newcommand{\gsetg}[1]{\gsetgx{#1}{\rho}}
\newcommand{\gsetgx}[2]{\mathcal{G}\llbracket#1\rrbracket{#2}}
\newcommand{\gsetd}[1]{\mathcal{D}\llbracket#1\rrbracket}
\newcommand{\xsrho}[3]{#1\left[#2 \mapsto #3 \right]}
\newcommand*\bigcdot{\mathpalette\bigcdot@{.7}}
\newcommand*\bigcdot@[2]{\mathbin{\vcenter{\hbox{\scalebox{#2}{$\m@th#1\bullet$}}}}}
\newcommand\eType[2]{\exists #1.#2}
\newcommand{\pack}[3]{\mathsf{pack}(#1,#2)~\mathsf{as}~#3}
\newcommand{\packSimple}[2]{\mathsf{pack}(#1,#2)}
\newcommand{\unpack}[4]{\mathsf{open}(#1,#2) = #3~\mathsf{in}~#4}
\newcommand{\twoTypes}[2]{{#1}@{#2}}
\newcommand{\AccountStore}{\mathsf{AccountStore}}
\newcommand{\SalaryPolicy}{\mathsf{SalaryPolicy}}
\newcommand{\optiont}[1]{\mathsf{Option}\left[#1\right]}
\newcommand{\XRel}{R}
\newcommand{\erni}[4]{\mathsf{ERNI}(#1,#2,#3,#4)}
\newcommand\elsec[0]{$\mathsf{\lambda}^{\exists}_{\mathsf{SEC}}$\xspace}
\newcommand\pairt[2]{#1\times#2}
\newcommand\sumt[2]{#1+#2}
\newcommand\paire[2]{\left\langle #1,#2\right\rangle }
\newcommand\pFst[1]{\mathsf{fst}~#1}
\newcommand\pSnd[1]{\mathsf{snd}~#1}
\newcommand\inl[1]{\mathsf{inl}~#1}
\newcommand\inr[1]{\mathsf{inr}~#1}
\newcommand\sCase[5]{\mathsf{case}~#1~\mathsf{of}~\mathsf{inl}~#2.#3|\mathsf{inr}~#4.#5}
\newcommand\lambdae[3]{\lambda#1:#2.~#3}
\newcommand\appe[2]{#1~#2}
\newcommand\binOp[2]{#1\oplus#2}
\newcommand\binOpSym{\oplus}
\newcommand\unitval{\left\langle \right\rangle}
\newcommand\unittype{\mathsf{1}}
\newcommand\repType[1]{\mathsf{sftype}(#1)}
\newcommand\precise[2]{#1 \sqsubseteq #2}
\newcommand\stamp[2]{\lceil#1\rceil_{#2}}
\newcommand\leftFacet{safety~}
\newcommand\rightFacet{declassification~}
\newcommand\publicWord{public~}
\newcommand\secretWord{secret~}
\newcommand{\eatomtwo}[2]{{\text{Atom}\left[#1,#2\right]}} 
\newcommand{\eatomunion}[1]{{\text{Atom}_{\rho}\left[#1\right]}} 
\newcommand{\erel}[2]{{\text{Rel}\left[#1,#2\right]}}
\newcommand{\rhofst}{\rho_{1}}
\newcommand{\rhosnd}{\rho_{2}}
\newcommand{\extenv}[3]{{#1\left[#2 \mapsto #3\right]}}
\newcommand{\esetv}[1]{\esetvx{#1}{\rho}}
\newcommand{\esetvx}[2]{\mathcal{V}\llbracket#1\rrbracket{#2}}
\newcommand{\esetc}[1]{\esetcx{#1}{\rho}}
\newcommand{\esetcx}[2]{{\mathcal{C}\llbracket#1\rrbracket}{#2}}
\newcommand{\esetg}[1]{\esetgx{#1}{\rho}}
\newcommand{\esetgx}[2]{\mathcal{G}\llbracket#1\rrbracket{#2}}
\newcommand{\esetd}[1]{\mathcal{D}\llbracket#1\rrbracket}
\definecolor{darkgreen}{RGB}{0,128,0}
\lstdefinelanguage{scala}{
    morekeywords={let,abstract,case,catch,class,def,%
      do,else,extends,false,final,finally,%
      for,if,implicit,import,match,mixin,%
      new,null,object,override,package,%
      private,protected,requires,return,sealed,%
      super,this,throw,trait,true,try,%
      type,val,var,while,with,yield, app, has,
			top,bottom,declassify,Obj},
    sensitive=true,
		keywordstyle={\color{blue}},
    morecomment=[l][\color{darkgreen}]{//},
    morecomment=[n]{/*}{*/},
    morestring=[b]",
    morestring=[b]',
    morestring=[b]""",
    escapeinside={(*}{*)},
    moredelim=**[is][{\btHL}]{`}{`}
  }
\begin{document}
\title{Existential Types for Relaxed Noninterference}
\author{Raimil Cruz\inst{1}\thanks{This work is partially funded by CONICYT FONDECYT Regular Projects 1150017 and 1190058. 
Raimil Cruz is partially funded by CONICYT-PCHA/Doctorado Nacional/2014-63140148} \and
\'Eric Tanter\inst{1}}
\authorrunning{R. Cruz and \'E. Tanter}
\institute{PLEIAD Lab, Computer Science Department (DCC), University of Chile, Santiago, Chile \\
\email{\{racruz,etanter\}@dcc.uchile.cl}}
\maketitle
\begin{abstract}
Information-flow security type systems ensure confidentiality by enforcing noninterference: a program cannot leak private data to public channels. However, in practice, programs need to selectively declassify information about private data. 
Several approaches have provided a notion of {\em relaxed noninterference} supporting selective and expressive declassification while retaining 
a formal security property. The {\em labels-as-functions} 
approach provides relaxed noninterference by means of declassification policies expressed as functions. The {\em labels-as-types} approach expresses declassification policies using type abstraction and {\em faceted types}, a pair of types representing the secret and public facets of values. The original proposal of labels-as-types is formulated in an object-oriented setting where type abstraction is realized by subtyping. 
The object-oriented approach however suffers from limitations due to its receiver-centric paradigm. 

In this work, we consider an alternative approach to labels-as-types, applicable in non-object-oriented languages, which allows us to express advanced declassification policies, such as extrinsic  policies, based on a different form of type abstraction: existential types. 
An existential type exposes abstract types and operations on these; we leverage this abstraction mechanism to express secrets that can be declassified using the provided operations. 
We formalize the approach in a core functional calculus with existential types, define existential relaxed noninterference, and prove that well-typed programs satisfy this form of type-based relaxed noninterference.

\end{abstract}
\section{Introduction}
\label{sec:elsec-introduction}
A sound information-flow security type system ensures confidentiality by means of {\em noninterference}, a property that states that \publicWord 
values (\eg $\String_{\lbot}$) do not depend on \secretWord values (\eg $\String_{\ltop}$). This enables a modular
reasoning principle about well-typed programs. For instance, in a pure language, a function $f: \String_{\ltop} -> \String_{\lbot}$ is necessary a constant function because the (public) result cannot leak information about the (private) argument.

However, noninterference is too stringent and real programs need to explicitly {\em declassify} some information about \secretWord values. A simple mechanism to support explicit declassification is to add a \icode{declassify} operator from \secretWord to \publicWord expressions, as provided for instance in Jif~\cite{myers:jif}. However, 
the arbitrary use of this operator breaks formal guarantees about confidentiality. Providing a declassification mechanism while still enforcing a noninterference-like property is an active topic of research~\cite{cruzAl:ecoop2017,cruzTanter:secdev2019,hicksAl:plas2006,liZdancewic:popl2005,ngoAl:arXiv2019,sabelfeldSands:jcs2009}.

One interesting mechanism is the {\em labels-as-functions} approach of \citet{liZdancewic:popl2005}, which supports 
{\em declassification policies} while ensuring {\em relaxed} noninterference. 
Instead of using security labels such as $\lbot$ and $\ltop$ that are taken from a 
security lattice of symbols, security labels are functions. These functions, called declassification policies, denote the intended computations to declassify values. For instance, the function $\lambda x. \lambda y. x ==y$ denotes the declassification policy: ``the result of the comparison
of the \secretWord value $x$ with the \publicWord value $y$ can be declassified''. The identity 
function denotes \publicWord values, while a constant function denotes \secretWord values. Then, any use of a value that
does not follow its declassification policy yields a \secretWord result. The labels-as-functions approach is very expressive, 
but its main drawback is that label ordering relies on a semantic interpretation of functions and program equivalence, which is hard to realize in practice and rules out recursive declassification policies\footnote{\citet{liZdancewic:popl2005} rule out recursive declassification because otherwise the subtyping relation induced by security labels (sets of functions) would be undecidable.}.

An alternative approach to {\em labels-as-functions} is {\em labels-as-types}, recently proposed by \citet{cruzAl:ecoop2017}. The key idea is to exploit type abstraction to control how much of a value is open to declassification. 
The approach was originally developed in an object-oriented language, where type abstraction is realized by subtyping. A security type $\stype{T}{U}$ is composed of two facets: the safety type $T$ denotes the secret view of the value, and the declassification type $U$ (such that $T<:U$) specifies the public view, \ie~the methods that can be used to declassify a \secretWord value.
For instance, the type $\stype{\String}{\String}$ denotes a \publicWord string value, \ie all the methods of $\String$ are 
available for declassification, while 
the type $\stype{\String}{\top}$ (where $\top$ is the empty interface type) denotes a secret $\String$ value, 
\ie there is no method available
to declassify information about the secret. 
Then, the interesting declassification policies are expressed with a type interface between $\String$  and $\top$; \eg~the type $\stype{\String}{\StringLen}$ exposes the method \icode{length} of $\String$ for declassification.
With this type-based approach, label ordering is simplified to standard subtyping, which is a simple syntactic property, and naturally supports recursive declassification. Also, this type-based approach enforces
a security property called {\em type-based relaxed noninterference}, which accounts for type-based declassification and provides a modular reasoning principle similar to standard noninterference.

We observe that developing type-based relaxed noninterference in an object-oriented setting, exploiting subtyping as the type abstraction mechanism, imposes some restrictions on the declassification policies that can be expressed. In particular, because security types are of the form $\stype{T}{U}$ where the declassification type $U$ is a supertype of the safety type $T$---a necessary constraint to ensure type safety---means that one cannot declassify properties that are {\em extrinsic} to (\ie~computed externally from) the secret value. For instance, because a typical $\String$ type does not feature 
an \icode{encrypt} method, it is not possible to express the declassification policy that ``the encrypted representation of the password is public". 

%
In this paper, we explore an alternative approach to labels-as-types and relaxed noninterference, exploiting another well-known type abstraction mechanism: existential types. An existential type $\exists X.T$ provides an abstract type $X$ and an interface $T$ to operate with values of the abstract type $X$. Then instances of the abstract type $X$ are akin to secrets that can be declassified using the operations described by $T$. For instance, the existential type $\exists X.\lbrack \mathsf{get}: X, \mathsf{length}: X -> \Int \rbrack$ makes it possible to obtain a (secret) value of type $X$ with \icode{get}, that only can be ``declassified'' with the \icode{length} function to obtain a (public) integer.

Because existential types are the essence of abstraction mechanisms like abstract data types and modules~\cite{mitchellPlotkin:toplas1888}, this work shows how the labels-as-types approach can be applied in non-object-oriented languages. The only required extension is the notion of faceted types, which are necessary to capture the natural separation 
between {\em privileged observers} (allowed to observe secret results) and {\em public observers} (\ie the attacker, which can only observe public values)~\footnote{To account for $n > 2$ observation levels, faceted types can be extended to have n facets.}. 
Additionally, the existential approach is more expressive than the object-oriented one in that extrinsic declassification policies can naturally be encoded with existential types.

The contributions of this work are:
\begin{itemize}
	\item We explore an alternative type abstraction mechanism to realize the labels-as-types approach to expressive declassification, retaining the practical aspect of using an existing mechanism (here, existential types), while supporting more expressive declassification policies (Section~\ref{sec:elsec-overview}). 
	\item We define a new version of type-based relaxed noninterference, called existential relaxed noninterference, which accounts for extrinsic declassification using existential types (Section~\ref{sec:elsec-erni-overview}).
	\item We capture the essence of the use of existential types for relaxed noninterference in a core functional language \elsec 
	(Section~\ref{sec:elsec-model}), and prove that its type system soundly enforces existential relaxed noninterference (Section~\ref{sec:elsec-sec-model}). 
\end{itemize}

Section~\ref{sec:elsec-erni-formal-illustration} explains how the formal definitions apply by revisiting an example from Section~\ref{sec:elsec-erni-overview}. Section~\ref{sec:elsec-related-work} discuses related work and Section~\ref{sec:elsec-conclusion} concludes. 

\section{Overview}
\label{sec:elsec-overview}
We now explain how to use the type abstraction mechanism of existential types to denote secrets that can be selectively declassified. 
First, we give a quick overview of existential types, with their introduction and elimination forms. Next, we develop the intuitive connection between the type abstraction of standard existential types and security typing. Then, we show that to support computing with secrets, which is natural for information-flow control languages, we need to introduce faceted types.

\subsection{Existential types}
An existential type $\exists X.T$ is a pair of an (abstract) type variable 
$X$ and a type $T$ where $X$ is bound; typically $T$ provides operations to create, transform and observe values of the abstract type $X$~\cite{mitchellPlotkin:toplas1888}.

For instance,  the type $\AccountStore$ below models a simplified user repository. It provides the password of a user at 
type $X$ with the function \icode{userPass} and a function \icode{verifyPass} to check (observe) whether an arbitrary string value is equal
to the password.
\begin{displaymath}
\begin{array}{rcl}
	\AccountStore & \triangleq & \exists X.
			\begin{block}
			  \lbrack~
				\mathsf{userPass}: \String -> X \\
				~~\mathsf{verifyPass}: \String -> X -> \Bool
				\rbrack \\
			\end{block}	
\end{array}
\end{displaymath}

Values of an existential type $\exists X.T$ take the form of a {\em package} that packs together the {\em representation} 
type for the abstract type $X$ with an implementation $v$ of the operations provided by $T$. One can think of packages as {\em modules} with signatures.

For instance, the package 
$p \defas \pack{\String}{v}{\AccountStore}$ is 
a value of type $\AccountStore$, where $\String$ is the representation type and $v$, 
defined below, is a record implementing functions \icode{userPass} and \icode{verifyPass}:
\begin{displaymath}
\begin{array}{rcl}
	v & \triangleq & 
			\begin{block}
			  \lbrack~
				\mathsf{userPass} = \lambdae{x}{\String}{\mathsf{userPassFromDb}(x)} \\
				\mathsf{verifyPass}= \lambda x:\String. \lambda y:\String.\;\mathsf{equal}(x,y)
				\rbrack \\
			\end{block}	
\end{array}
\end{displaymath}
Note that the implementation, $v$, directly uses the representation type $\String$, \eg~\icode{userPass} has type $\String -> \String$ and is implemented using a primitive function $\mathsf{userPassFromDb}: \String -> \String$ 
to retrieve the user password from a database. Likewise, the implementation of \icode{verifyPass} uses equality between its arguments of type $\String$.

To use an existential type, we have to {\em open} the package (\ie~import the module) to get access to the implementation $v$, along with the abstract type that hides the actual representation type.
The expression $\unpack{X}{x}{p}{e'}$ opens the package $p$ above, exposing the representation type abstractly as a type variable $X$, and the implementation as term variable $x$, within the scope of the body $e'$. Crucially, the expression $e'$ has no access to the representation type $\String$, therefore nothing can be done with a value of type $X$, beyond using it with the operations provided by $\AccountStore$.

\subsection{Type-based declassification policies with existential types}
\label{sec:elsec-type-based-decl-with-etypes}

We can establish an analogy between existential types and selective declassification of secrets: an existential type $\exists X. T$ exposes operations to obtain \secretWord values, at the abstract type $X$, and the operations of $T$ can be used to declassify theses secrets. 

For instance, 
$\AccountStore$ provides a \secretWord string password with the function \icode{userPass}, and the function \icode{verifyPass} 
expresses the declassification policy: ``the comparison of a secret password with a public string can be made public''.
With this point of view, concrete types such as $\Bool$ and $\String$ represent \publicWord values. A fully-secret value, \ie a secret that is not declassified, can be modeled by an existential type without any observation function for the abstract type.

We can use the declassification policy modeled with $\AccountStore$ to implement a valid well-typed login functionality. 
The \icode{login} function below is defined in a scope where the package $p$ of type $\AccountStore$ is opened, providing the type name $X$ for the abstract type and the variable \icode{store} for the package implementation.

\begin{lstlisting}[numbers=none]
(*$\unpack{X}{\textsf{store}}{p}$*)
  ...
  (*$\String$*) login((*$\String$*) guess, (*$\String$*) username){
    if(store.verifyPass(guess,store.userPass(username)))
      ...
  }
\end{lstlisting}
The \icode{login} function first obtains the user \secretWord password of type $X$ with \\* \icode{store.userPass(username)}, and 
then passes the \secretWord password (of type $X$) to the function \icode{verifyPass} with the \icode{guess} \publicWord password to obtain
the \publicWord boolean result. The above code makes a valid use of $\AccountStore$ and therefore is well-typed.

The type abstraction provided by $\AccountStore$ avoids leaking information accidentally. For instance, directly returning the secret password of type $X$ is a type error, even though internally it is a string. 
Likewise, 
the expression 
\icode{length(store.userPass(username))} is ill-typed.

Note that because declassification relies on the abstraction mechanism of existential types, we work under the assumption that the person that writes the security policy---the package implementation and the existential type---is responsible for not leaking the secret due to a bad implementation or specification (\eg~exposing the secret password through the identity function of type $X->\String$).

\paragraph{Progressive declassification.}
The analogy of existential types as a mechanism to express declassification holds when one considers {\em progressive declassification}~\cite{cruzAl:ecoop2017,liZdancewic:popl2005}, which refers to the possibility of only declassifying information after a {\em sequence} of operations is performed. With existential types, we can express progressive
declassification by constraining the creation of secrets based on other secrets. 

Consider the following refinement of $\AccountStore$, which supports the declassification policy ``whether an {\em authenticated} user's salary is above \$100,000":
\begin{displaymath}
\begin{array}{rcl}
	\AccountStore & \triangleq & \exists X,Y,Z.
			\begin{block}
			  \lbrack~
				\mathsf{userPass}: \String -> X \\
				~~\mathsf{verifyPass}: \String -> X -> \optiont{Y} \\
				~~\mathsf{userSalary}: Y -> Z \\
				~~\mathsf{isSixDigit}: Z -> \Bool \rbrack
			\end{block}	
\end{array}
\end{displaymath}

$\AccountStore$ provides extra abstract types $Y$ and $Z$, denoting an authentication token (for a specific user) and a user salary, respectively. The type signatures enforce that, to obtain the user salary, the user must be authenticated: a value of type $Y$ is needed to apply \icode{userSalary}. Such a value can be obtained only after successful authentication: \icode{verifyPass} now returns an $\optiont{Y}$ value, instead of a \icode{Bool} value. 
Note that the salary itself is secret, since it has the abstract type $Z$. 
Finally, \icode{isSixDigit} function reveals whether a salary is above \$100,000 by returning a \publicWord boolean result.

Observe how the use of abstract type variables allows the existential type to enforce sequencing among operations. Also, we can provide more declassification policies for a user salary $Z$, and can use the authentication token $Y$ with more operations. An existential type is therefore an expressive means to capture rich declassification policies, including sequencing and alternation.

\subsection{Computing with secrets}

As we have seen, with standard existential types, values of an abstract type $X$ must be eliminated 
with operations provided by the existential type. 
While so far the analogy between type abstraction with existential types and expressive declassification holds nicely, there are some obstacles. 

First, with standard existential types, it is simply forbidden to compute with secrets. For instance, applying the function \icode{length}$:\String -> \Int$ with a (secret) value of type $X$ is a type error.
However, information-flow type systems are more flexible: they support computing with secret values, as long as the computation itself is henceforth considered secret, \eg~the value it produces is itself secret~\cite{zdancewic}. Allowing secret computations is useful for privileged observers, which are authorized to see private values.

{\em Faceted types} were introduced to support this ``dual mode'' of information-flow type systems in the labels-as-types approach~\cite{cruzAl:ecoop2017}. While that work is based on objects and subtyping, here we develop the notion of {\em existential faceted types}: faceted types of the form $\twoTypes{T}{U}$, where $T$ indicates the \leftFacet type used for the implementation and $U$ the \rightFacet type used for confidentiality. 

\begin{figure}[t]
\begin{displaymath}
\begin{array}{rcl}
	\AccountStore & \triangleq & \exists X, Y, Z.
			\begin{block}
			  \lbrack~
				\mathsf{userPass}: \stypebot{\String} -> \twoTypes{\String}{X} \\
				~~\mathsf{verifyPass}: \stypebot{\String} -> \twoTypes{\String}{X} -> \stypebot{\optiont{\stypebot{Y}}} \\
				~~\mathsf{userSalary}: \stypebot{Y} -> \twoTypes{\Int}{Z} \\
				~~\mathsf{isSixDigit}: \twoTypes{\Int}{Z} -> \stypebot{\Bool} \rbrack 
			\end{block}	
\end{array}
\end{displaymath}
\caption{Account store with faceted types}
\label{fig:facetedstore}
\end{figure}
Figure~\ref{fig:facetedstore} shows $\AccountStore$ with existential faceted types. 
Given a public string ($\stypebot{T}$ denotes $\twoTypes{T}{T}$), \icode{userPass} returns a value that is a string {\em for the privileged observer}, and a secret of type $X$ {\em for the public observer} (\ie~the attacker).

When computing with a value of type $\twoTypes{\String}{X}$, there are now two options: either we use a function that expects a value of type $\twoTypes{\String}{X}$ as argument, such as \icode{verifyPass}, or we use a function that goes beyond declassification, such as \icode{length}, and should therefore produce a {\em fully} private result. What type should such private results have? In order to avoid having to introduce a fresh type variable, we assume a fixed (unusable) type $\top$, and write $\stypetop{\Int}$ to denote $\twoTypes{\Int}{\top}$.

This supports computing with secrets as follows:
\begin{lstlisting}[numbers=none]
(*$\stypetop{\String}$*) login((*$\stypebot{\String}$*) guess, (*$\stypebot{\String}$*) username){
  if(length(store.userPass(username)) == length(guess))...;
}
\end{lstlisting}
Instead of being ill-typed, 
\icode{length(store.userPass(username))} is well-typed at 
type $\stypetop{\Int}$, so the function \icode{login} can return a private result, \eg  a private string at type $\stypetop{\String}$.

\subsection{Public data as (declassifiable) secret}
\label{sec:elsec-public-data-as-declassificable-secret}

Information-flow type systems allow any value to be considered private. 
With existential faceted types, this feature is captured by a subtyping relation such that for any $T$,
$\twoTypes{T}{T} <: \twoTypes{T}{\top}$, and for any $X$, 
$\twoTypes{T}{X} <: \twoTypes{T}{\top}$. Value flows that are justified by subtyping are safe from a confidentiality point of view. 
In particular, if a (declassifiable) value of type $\twoTypes{\String}{X}$
is passed at type $\stypetop{\String}$, it is henceforth fully private,  disallowing any further declassification.

Additionally, in the presence of declassifiable secrets, of type $\twoTypes{T}{X}$, one would also expect public values to be ``upgraded'' to declassifiable secrets. This requires the security subtyping relation to admit that, for any type $T$ and type variable $X$, we have 
$\stypebot{T}<:\twoTypes{T}{X}$.

Note that admitting such flows means that type variables in a \rightFacet type position are more permissive than when they occur in a safety type position. For instance, \icode{isSixDigit} can be applied to any public integer (of type $\stypebot{\Int}$), and not only to ones returned by \icode{userSalary}. In contrast, \icode{userSalary} can only be applied to a value opaquely obtained as a result of \icode{verifyPass}.
In effect, the representation of authentication tokens is still kept abstract, at type $\stypebot{Y}$ (\ie~$\twoTypes{Y}{Y}$). This prevents clients from actually knowing how these tokens are implemented, preserving the benefits of standard existential types. Conversely, the salary and password expose their representation types ($\Int$ and $\String$ respectively), thereby enabling secret computation by clients.

\section{Relaxed noninterference with existential types}
\label{sec:elsec-erni-overview}
%

Existential faceted types support a novel notion of type-based relaxed noninterference called {\em existential 
relaxed noninterference} (ERNI) that defines if a program with existential faceted types is secure. 
ERNI is based on type-based {\em equivalences} between values at existential faceted types. We formally define the notions of type-based equivalence and ERNI in Section~\ref{sec:elsec-sec-model}, but here we provide an intuition for this security criterion and the associated reasoning.
Let us first consider simple types, before looking at existential types.

\paragraph{Type-based relaxed noninterference.}
Two integers are equivalent at type $\twoTypes{\Int}{\Int} = \stypebot{\Int}$ if they are syntactically equal, meaning that 
a public observer can distinguish between two integers at type $\stypebot{\Int}$. We can characterize the meaning of the faceted type $\stypebot{\Int}$ with the partial equivalence relation $Eq_{\Int} =  \{(\textsf{n},\textsf{n})\in \Int \times \Int\}$. Using this, two integers 
$\mathsf{v_1}$ and $\mathsf{v_2}$ are equivalent at type $\stypebot{\Int}$ if they are in the relation $Eq_{\Int}$---meaning they are syntactically equal.

Dually, the type $\twoTypes{\Int}{\top}= \stypetop{\Int}$ characterizes 
integer values that are indistinguishable for a public observer, therefore 
{\em any} two integers are equivalent at type $\stypetop{\Int}$.
Consequently, the meaning of the faceted type $\stypetop{\Int}$ is the total relation $All_{\Int} = \Int \times \Int$ that relates any two integers $\mathsf{v_1}$ and $\mathsf{v_2}$.

With these base type-based equivalences, one can express the security property of functions, open terms, and programs with inputs as follows: 
a program $p$ satisfies ERNI at an observation type $S_{out}$ if, given two input values that are equivalent at type $S_{in}$, the executions of $p$ with each value produce results that are equivalent at type $S_{out}$. This modular reasoning principle is akin to standard noninterference~\cite{zdancewic} and type-based relaxed noninterference~\cite{cruzAl:ecoop2017}.

Intuitively, $S_{in}$ models the {\em initial knowledge} of the public observer about the (potentially-secret) input, and $S_{out}$ denotes the {\em final knowledge} that the public observer has to distinguish results of the executions of the program $p$. The program $p$ is secure if, given inputs from the same equivalence class of $S_{in}$, it produces results in the same equivalence class of $S_{out}$.
Consider the program $e = \mathsf{length(x)}$ where \icode{x} has type $\stypetop{\String}$. The program $e$ does not satisfy ERNI at type $\stypebot{\Int}$, because given two strings 
\icode{``a"} and \icode{``aa"} that are equivalent at $\stypetop{\String}$, \ie $(\mathsf{``a"},\mathsf{``aa"}) \in All_{\String}$, we obtain 
the results \icode{1} and \icode{2}, which are not equivalent at type $\stypebot{\Int}$, \ie $(\mathsf{1},\mathsf{2}) \notin Eq_{\Int}$. However,
$e$ is secure at type $\stypetop{\Int}$.

\paragraph{Relaxed noninterference and existentials.}
When we introduce faceted types with type variables such as $\twoTypes{\Int}{X}$, we need to answer: what values are equivalent at type $\twoTypes{\Int}{X}$?
Without stepping into technical details yet, let us say that the meaning of a type $\twoTypes{\Int}{X}$ is an {\em arbitrary} partial equivalence relation $R_{X} \subseteq \Int \times \Int$, and two values $\mathsf{v_1}$ and $\mathsf{v_2}$ are equivalent at $\twoTypes{\Int}{X}$ if they are in $R_{X}$. 
Because $X$ is an existentially-quantified variable, inside the package implementation that exports the type variable $X$, $R_{X}$ is known, 
but outside the package, \ie for clients of 
a type $\twoTypes{\Int}{X}$, $R_{X}$ is completely abstract: a public observer that opens a package exporting the type variable $X$ does not know anything about values of type $\twoTypes{\Int}{X}$.
For instance, consider again the program $e= \mathsf{length(x)}$ but assume that  \icode{x} now has type $\twoTypes{\String}{Y}$. Does $e$ satisfy ERNI at type $\stypebot{\Int}$? 
Here, we need to know what is the relation $R_{Y}$ that gives meaning to $\twoTypes{\String}{Y}$. 
Instead of picking only one relation $R_{Y}$, ERNI quantifies over {\em all} possible relations $R_{Y}$.
This universal quantification over $R_{Y}$ corresponds to the standard type abstraction mechanism for abstract types (\ie parametricity~\cite{reynolds:83}). That is, the program $e$ satisfies ERNI at type $\twoTypes{\Int}{\Int}$, if it is secure for all relations $R_{Y} \subseteq \String \times \String$. 
Then, to show that ERNI at type $\twoTypes{\Int}{\Int}$ does not hold for $e$ it suffices to exhibit a specific relation for which ERNI is violated. Take the relation $R_{Y} = \{(\mathsf{``a"},\mathsf{``aa"})\}$, and observe that \icode{length(``a")} $\neq$ \icode{length(``aa")}.

\paragraph{Illustration.} Finally, we give an intuition of how ERNI accounts for extrinsic declassification policies.
We reuse the salary operations from $\AccountStore$, simplifying the retrieval of the secret salary. The type $\SalaryPolicy$ provides
a secret \icode{salary} and a function  \icode{isSixDigit} to declassify the salary as before.
\begin{displaymath}
\SalaryPolicy \defas \exists Z. \lbrack \mathsf{salary}: \twoTypes{\Int}{Z},~ \mathsf{isSixDigit}: \twoTypes{\Int}{Z} -> \stypebot{\Bool} \rbrack
\end{displaymath}

Intuitively two package values $\mathsf{p_1} \defas$ $\pack{\Int}{\mathsf{v_1}}{\SalaryPolicy}$ and $\mathsf{p_2} \defas$ $\pack{\Int}{\mathsf{v_2}}{\SalaryPolicy}$ are equivalent at $\stypebot{\SalaryPolicy}$ if $\mathsf{v_1.salary}$ and $\mathsf{v_2.salary}$ 
are equivalent at $\twoTypes{\Int}{Z}$ and $\mathsf{v_1.isSixDigit}$ and $\mathsf{v_2.isSixDigit}$ are equivalent at $\twoTypes{\Int}{Z} -> \stypebot{\Bool}$, 
\ie the functions \icode{isSixDigit} of $\mathsf{v_1}$ and $\mathsf{v_2}$ 
produce equivalent results at $\stypebot{\Bool}$ when given equivalent arguments at $\twoTypes{\Int}{Z}$.

Consider now the program $e' = \mathsf{x.isSixDigit (x.salary)}$ with input $\mathsf{x}$ of type $\lbrack \mathsf{salary}: \twoTypes{\Int}{Z},~ \mathsf{isSixDigit}: \twoTypes{\Int}{Z} -> \stypebot{\Bool} \rbrack$.
Taking into account the above equivalence for $\SalaryPolicy$, the program $e'$
satisfies ERNI at type $\stypebot{\Bool}$ because it adheres to the salary policy.
Indeed, given {\em any} two equivalent packages $\mathsf{p_1} \defas$ $\pack{\Int}{\mathsf{v_1}}{\SalaryPolicy}$ and $\mathsf{p_2} \defas$ $\pack{\Int}{\mathsf{v_2}}{\SalaryPolicy}$, the expressions    
$\mathsf{v_1.isSixDigit (v_1.salary)}$ and $\mathsf{v_2.isSixDigit (v_2.salary)}$ are {\em necessarily} equivalent at type $\stypebot{\Bool}$.
However, the program $\mathsf{(x.salary) \% 2}$ does not satisfy ERNI at $\stypebot{\Int}$, because given equivalent package implementations $\mathsf{v_1} \triangleq \lbrack \mathsf{salary} = \mathsf{100001}, \cdots \rbrack$ and 
$\mathsf{v_2} \triangleq \lbrack \mathsf{salary} = \mathsf{100002}, \cdots \rbrack$, it yields \icode{1} for $\mathsf{v_1}$ and \icode{0} for $\mathsf{v_2}$, and both values are not equivalent at $\stypebot{\Int}$, \ie $(\mathsf{1},\mathsf{0}) \notin Eq_{\Int}$.

\section{Formal semantics}
\label{sec:elsec-model}
We model existential faceted types in \elsec, 
which is essentially the simply-typed lambda calculus augmented with the unit type, pair types, sum types, existential types, and faceted types. All the examples presented in Section~\ref{sec:elsec-overview} can thus be encoded in \elsec using standard techniques.
This section covers the syntax, static and dynamic semantics of \elsec. The formalization of existential relaxed noninterference and the security type soundness of \elsec are presented in Section~\ref{sec:elsec-sec-model}.

\subsection{Syntax}
Figure~\ref{fig:elsec-syntax} presents the syntax of \elsec. Expressions $e$ are completely standard~\cite{pierce:tapl}, including functions, applications, variables, primitive values
, binary operations on primitive values, the unit value, pairs with their first and second projections, injections $\inl{e}$ and $\inr{e}$ to introduce sum types, as well as a \icode{case}
construct to eliminate sums; finally, \icode{pack} and \icode{open} introduce and eliminate existential packages, respectively.
Types $T$ include
function types $S->S$, primitive types $\primt$, the unit type $\unittype$, sum types $\sumt{S}{S}$, pair types 
$\pairt{S}{S}$, existential types $\eType{X}{T}$, type variables $X$ and the top type $\top$. A security type $S$ 
is a faceted type $\twoTypes{T}{U}$ where $T$ is the \leftFacet type and $U$ is the \rightFacet type.

\begin{figure}[t]
  \begin{small}
		\begin{displaymath}
				\begin{array}{rcll}
					e & ::= & 
					\begin{block}
					\lambda x:S.e |  \appe{e}{e} | x | \primb | \binOp{e}{e} | \unitval | \paire{e}{e} \\
					| \pFst{e} | \pSnd{e} | \inl{e} | \inr{e} | \sCase{t}{x}{e}{x}{e} \\
					| \pack{T}{e}{T} | \unpack{X}{x}{e}{e}
					\end{block} & \text{(terms)}\\
					v & ::= & \lambda x:S.e | \primb | \unitval| \paire{v}{v} | \inl{v} | \inr{v} |  \pack{T}{v}{T} |  & \text{(values)}\\										
					T, U& ::= & S -> S | \primt | \unittype | \sumt{S}{S} | \pairt{S}{S} | \eType{X}{T} | X | \top & \text{(types)}\\
					\primt & ::= & (\eg~\Int, \String) & \text{(primitive types)}\\
					S & ::= & \twoTypes{T}{U} & \text{(security types)}\\
				\stypebot{T} &\triangleq & \twoTypes{T}{T} \qquad \stypetop{T} \triangleq \twoTypes{T}{\top}
				\end{array}
			\end{displaymath}
 \end{small}
\caption{\elsec: Syntax} 
\label{fig:elsec-syntax}
\end{figure}

\paragraph{Well-formedness of security types.}
We now comment on the rules for valid security types, \ie {\em facet-wise} well-formed types. We have three general form of security types $\stypebot{T}$ and $\stypetop{T}$ and 
$\twoTypes{T}{X}$. While there is no constraint on forming types $\stypebot{T}$ and $\stypetop{T}$, such as $\stypebot{\Int}$, $\stypebot{X}$ and $\stypetop{\Int}$, we need two considerations for types such 
as $\twoTypes{\Int}{X}$.

The first consideration is that {\em inside} an existential type $\exists X.T$ 
the type variable $X$, when used as a declassification type, must be uniquely associated to a concrete safety type. For instance, the existential type $\eType{X}{(\twoTypes{\String}{X} -> \twoTypes{\Int}{X})}$ is ill-formed, while $\eType{X}{(\twoTypes{\Int}{X} -> \twoTypes{\Int}{X})}$ and $\eType{X}{(\twoTypes{X}{X} -> \twoTypes{\Int}{X})}$ are well-formed. 
For such well-formed types, we use the auxiliary function $\repType{\eType{X}{T}}$ to obtain the safety type associated to $X$; for instance $\repType{\eType{X}{(\twoTypes{X}{X} -> \twoTypes{\Int}{X})}} = \Int$ (undefined on ill-formed types).

The second consideration is when a client opens a package.
The expression $\unpack{X}{x}{e}{e'}$ binds the type variable $X$ in $e'$, therefore the expression $e'$ can declare security types of the form 
$\twoTypes{T'}{X}$. However, for the declaration of the type $\twoTypes{T'}{X}$ to be valid, the safety type of the declassification type variable $X$ must be $T'$.
For instance, if the safety type of $X$ is $\Int$, the expression $e'$ cannot declare security types such as $\twoTypes{\String}{X}$, otherwise computations over secrets could get stuck.
The question is how to determine the safety type $T'$ of $X$ in $e'$. 
Crucially, the expression $e$ necessarily has to be of type $\stypebot{(\exists X.T)}$, therefore we can obtain the safety type
for $X$ with $\repType{\eType{X}{T}}$. To keep track of the safety type for each type variable $X$, we use a type variable environment $\Delta$ that maps type variables to types $T$ 
(\ie $\Delta ::= \bigcdot | \Delta, X:T$)%
With the previous considerations in mind, the rules for well-formed security types are straightforward. 
In the rest of the paper, we use the judgment $\wf{\Delta}{S}$ to mean {\em well-formed} security types $S$ under type environment $\Delta$. A well formed security type $S$ is 
both facet-wise well-formed and closed with respect to type variables. We also use $\Delta \models \Gamma$ to indicate that a type environment is well-formed, \ie all types in $\Gamma$ 
are well-formed. 
In the following, we assume well-formed security types and environments.

\subsection{Static Semantics}

Figure~\ref{fig:elsec-static-semantics} presents the static semantics 
of \elsec. Security typing relies on a subtyping judgment that validates secure information flows. 
The left-most rule justifies subtyping by reflexivity. The middle rule justifies subtyping for two security types with 
the same \leftFacet type, when the \rightFacet type of right security type is $\top$. Finally, the right-most rule 
justifies subtyping between a \publicWord type $\stypebot{T}$ and
$\twoTypes{T}{X}$.

As usual, the typing judgment $\Delta; \Gamma |- e : S$ 
denotes that ``the expression e has type $S$ under the type variable environment $\Delta$ and the type environment $\Gamma$''. 
The typing rules are mostly standard~\cite{pierce:tapl}. 
Here, we only discuss the special treatment of security types.

\begin{figure}[!htbp]
\begin{small}
\framebox{$S <: S$}
\begin{mathpar}	
		\twoTypes{T}{U} <: \twoTypes{T}{U}
	\qquad
      \twoTypes{T}{U} <: \twoTypes{T}{\top}
	\qquad
      \stypebot{T} <: \twoTypes{T}{X}
\end{mathpar}
\framebox{$\Delta; \Gamma |- e : S$}
\begin{mathpar}
	\inference[(TVar)]{
				x \in dom(\Gamma)
			}{
				\Delta; \Gamma |- x: \Gamma(x)
			}
			\quad
	\inference[(TS)]{
		\Delta; \Gamma |- e: S' & S' <: S
	}{
		\Delta; \Gamma |- e: S
	}
	\quad
	\inference[(TP)]{
			\primt = \Theta(\primb)
		}{
		\DeltaX;\Gamma |- \primb: \stypebot{\primt}
	} 
	\\
  	\inference[(TFun)]{			
			\Delta;\Gamma, x: S |- e : S'
    }{
      \Delta;\Gamma |- \lambdae{x}{S}{e}:  \stypebot{(S -> S')}
	}
	\quad
	\inference[(TPair)]{			
		\Delta;\Gamma |- e_1: S_1 \quad
		\Delta;\Gamma |- e_2: S_2 
    }{
      \Delta;\Gamma |- \paire{e_1}{e_2}: \stypebot{(\pairt{S_1}{S_2})}
	}	
	\\
	\inference[(TU)]{
		~
		}{
		\DeltaX;\Gamma |- \unitval{}: \stypebot{\unittype}
	}
	\quad 
	\inference[(TInl)]{			
			\Delta;\Gamma |- e: S_1 
	}{
	\Delta;\Gamma |- \inl{e}: \stypebot{(\sumt{S_1}{S_2})}
	} \quad
	\inference[(TInr)]{			
			\Delta;\Gamma |- e: S_2 
	}{
	\Delta;\Gamma |- \inr{e}: \stypebot{(\sumt{S_1}{S_2})}
	}
	\\
  	\inference[(TPack)]{
      \Delta;\Gamma |- e: \stypebot{(\ssubst{T}{T'}{X})}\\ 
			\Gbox{\precise{T'}{\repType{\eType{X}{T}}}}
    }{
      \Delta;\Gamma |- \pack{T'}{e}{\eType{X}{T}}: \stypebot{(\eType{X}{T})}
	}
	\quad
	\inference[(TApp)]{
	  \Delta;\Gamma |- e_1 : S & S = \twoTypes{(\mtype{S_1}{S_2})}{U} \\
		\Delta;\Gamma |- e_2: S_1
	}{
	\Delta;\Gamma |- \appe{e_1}{e_2}: \stamp{S_2}{S}
	}
	\\	
	\inference[(TOp)]{
		\Theta(\binOpSym): \primt \times \primt' -> \primt'' &
		\DeltaX;\Gamma |- e_1 : \twoTypes{\primt}{U} &
		\DeltaX;\Gamma |- e_2 : \twoTypes{\primt'}{U'}
	}{
		\DeltaX;\Gamma |- \binOp{e_1}{e_2}: \stamp{\stamp{\twoTypes{\primt''}{\primt''}}{\twoTypes{\primt}{U}}}{\twoTypes{\primt'}{U'}} 
	}\\
	\inference[(TFst)]{
		\Delta;\Gamma |- e : S \\ S = \twoTypes{(\pairt{S_1}{S_2})}{U}
	}
	{
		\Delta;\Gamma |- \pFst{e}: \stamp{S_1}{S}
	}\quad	
		\inference[(TSnd)]{
		\Delta;\Gamma |- e : S \\
		S = \twoTypes{(\pairt{S_1}{S_2})}{U}
	}{
		\Delta;\Gamma |- \pSnd{e}: \stamp{S_2}{S}
	}
	\\
	\inference[(TCase)]{
	\Delta;\Gamma |- e : S & S = \twoTypes{(\sumt{S_1}{S_2})}{U}\\
	\Delta;\Gamma,x_1 : S_1 |- e_1: S' &
	\Delta;\Gamma,x_2 : S_2 |- e_2: S' 
	}{
	\Delta;\Gamma |- \sCase{e}{x_1}{e_1}{x_2}{e_2}: \stamp{S'}{S}
	} \\
	\inference[(TOpen)]{
		\Delta;\Gamma |- e: S & S = \twoTypes{(\eType{X}{T})}{U} & \Gbox{T' \defas \repType{\eType{X}{T}}}\\
		\Delta,X:T';\Gamma, x: \stypebot{T} |- e': S'  & \wf{\Delta}{S'}
	}{
		\Delta;\Gamma |- \unpack{X}{x}{e}{e'}: \stamp{S'}{S}
	}
\end{mathpar}
\end{small}
\caption{\elsec: Static semantics}
  \label{fig:elsec-static-semantics}
\end{figure}

Rule (TVar) gives the security type to a type variable from the type environment and rule (TS) is the standard subtyping subsumption rule. 
Rules (TP), (TFun), (TPair), (TU), (TInl), (TInr) and (TPack) introduce primitive, function, pair, unit, sum and existential types, respectively.
In particular, rule (TPack)  requires the representation type of the package to 
be {\em more precise} than the safety type associated to $X$ in the existential type, \ie $\precise{T'}{\repType{\eType{X}{T}}}$. The precision judgment has only two rules: reflexivity $\precise{T}{T}$, and any type is more precise than a type variable $\precise{T}{X}$.

Rules (TApp), (TOp), (TFst), (TSnd), (TCase) and (TOpen) are elimination rules for function, primitive, pair, sum and existential types, respectively. 
When a secret is eliminated, the resulting computation must protect that secret. This is done with $\stamp{S'}{S}$, which changes the \rightFacet type of $S'$ to $\top$ if the type $S$ is not public:
$$\stamp{\twoTypes{T_1}{U_1}}{\twoTypes{T_2}{U_2}} = \twoTypes{T_1}{U_1} 
\text{ if }  T_2 = U_2, \text{ otherwise } \twoTypes{T_1}{\top}$$

Let us illustrate the use of rule (TApp). 
On the one hand, if the type of the function expression $e_1$ is $\stypebot{(S_1 -> S_2)}$, \ie it represents a public function, then
the type of the function application is $S_2$. On the other hand, if the function expression $e_1$ has type $\twoTypes{(S_1 -> \twoTypes{T_2}{U_2})}{X}$ or $\stypetop{(S_1 -> \twoTypes{T_2}{U_2})}$, \ie it represents a secret, then the function application has type $\twoTypes{T_2}{\top}$.

Rule (TOp) uses an auxiliary function $\Theta$ to obtain the signature of a primitive operator and ensures that the resulting type protects both operands with $\stamp{\stamp{\twoTypes{\primt''}{\primt''}}{\twoTypes{\primt}{U}}}{\twoTypes{\primt'}{U'}} $. Rules (TFst) and (TSnd) use the same principle to protect the projections of a pair.
Rule (TCase) requires the discriminee to be of type $\twoTypes{(\sumt{S_1}{S_2})}{U}$, and both branches must have the same type $S'$. Likewise, it protects the resulting computation with $\stamp{S'}{S}$. 

Finally, rule (TOpen) applies to expressions of the form $\unpack{X}{x}{e}{e'}$, by typing the body expression $e'$ in an 
extended type variable environment $\Delta,X:T'$ and a type environment $\Gamma, x:\stypebot{T}$. Two points are worth noticing.
First, the association $X:T'$ allows us
to verify that security types of the form $\twoTypes{T'}{X}$ defined in the body expression $e'$ are well-formed. 
Second, we make the well-formedness requirement explicit for the result type $\wf{\Delta}{S'}$, which implies that $S'$ is facet-wise well-formed and closed under $\Delta$---\ie the type variable $X$ cannot appear in $S'$.

\subsection{Dynamic semantics and type safety}

The execution of \elsec expressions is defined with a standard call-by-value small-step dynamic semantics based 
on evaluation contexts (Figure~\ref{fig:elsec-dynamic-semantics}). 
We abstract over the execution of primitive operators over primitive values using an auxiliary function $\theta$.
\begin{figure}[h]
\begin{small}	
	\begin{mathpar}			
			\begin{array}{llll}
				E & ::= & 
				\begin{block}
					\left[~\right] | \pFst{E} | \pSnd{E} | \sCase{E}{x_1}{e_1}{x_2}{e_2} \\
					| \appe{E}{e} | \appe{v}{E} 
						| \binOp{E}{e} | \binOp{v}{E} | \unpack{X}{x}{E}{e'} 
				\end{block}
				& \text{(evaluation contexts)}
			\end{array}
			\\
			\begin{block}
			\pFst{\paire{v_1}{v_2}} \reduce v_1 \\
			\pSnd{\paire{v_1}{v_2}} \reduce v_2 \\
			\sCase{\inl{v}}{x_1}{e_1}{x_2}{e_2} \reduce \ssubst{e_1}{v}{x_1} \\
			\sCase{\inr{v}}{x_1}{e_1}{x_2}{e_2} \reduce \ssubst{e_2}{v}{x_2} \\
			\appe{(\lambdae{x}{S}{e})}{v} \reduce \ssubst{e}{v}{x} \\
			\binOp{\primb_1}{\primb_2} \reduce \theta(\binOpSym,\primb_1,\primb_2) \\
			\unpack{X}{x}{(\pack{T'}{v}{\eType{X}{T}})}{e'} \reduce \ssubst{\ssubst{e'}{v}{x}}{T'}{X}
			\end{block}
			\inference{
				e \reduce e'
			}{
				E[e] \reduce E[e']
			}			
	\end{mathpar} 
\end{small}
\caption{\elsec: Dynamic semantics}
\label{fig:elsec-dynamic-semantics}
\end{figure}


We define the predicate $\mathsf{safe}(e)$ to indicate that the evaluation of the expression $e$ does not get stuck. 
\label{sec:elsec-safety}
\begin{restatable}[Safety]{definition}{elsecSafe}
\label{def:elsec-safe}
$\mathsf{safe}(e) \Longleftrightarrow \forall e'.~ e \reduce^{*} e' \implies e' = v~ or ~\exists e''.~ e' \reduce e''$
\end{restatable}
Well-typed \elsec closed terms are safe.
\begin{restatable}[Syntactic type safety]{theorem}{gobsecTypeSafety}
\label{the:gobsec-type-safety}
$ |- e : S  \implies  \mathsf{safe}(e)$
\end{restatable}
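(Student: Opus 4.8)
The plan is to prove syntactic type safety via the standard \emph{progress} and \emph{preservation} (subject reduction) lemmas, following the usual Wright--Felleisen syntactic approach. Since $\mathsf{safe}(e)$ quantifies over all $e'$ reachable from $e$ by $\reduce^{*}$, I would first establish that typing is preserved under a single reduction step, then that any well-typed closed term is either a value or can take a step, and finally combine these by induction on the length of the reduction sequence.

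First I would prove \textbf{Preservation}: if $\typeof{e}{S}$ and $e \reduce e'$, then $\typeof{e'}{S'}$ for some $S'$ with $S' <: S$ (allowing the type to get \emph{more} secret under subtyping is harmless, and in fact may be necessary because of the $\stamp{\cdot}{\cdot}$ stamping in the elimination rules). This proceeds by induction on the typing derivation, with a case analysis on the reduction rule applied. The congruence rule $E[e]\reduce E[e']$ is handled by a \emph{replacement}/context lemma: decomposing the typing of $E[e]$ to find the type of the redex $e$ in its hole, applying the IH, and reassembling. The head-reduction cases (projections, \icode{case}, $\beta$-reduction, primitive operators, and the \icode{open}/\icode{pack} rule) each require a \textbf{Substitution Lemma}. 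For the term-level redexes I need the usual term substitution lemma: if $\Delta;\Gamma,x:S_1 \vdash e:S'$ and $\Delta;\Gamma\vdash v:S_1$ then $\Delta;\Gamma\vdash \ssubst{e}{v}{x}:S'$. The \icode{open}/\icode{pack} case additionally requires a \emph{type} substitution lemma, since the reduct $\ssubst{\ssubst{e'}{v}{x}}{T'}{X}$ substitutes the representation type $T'$ for the abstract variable $X$; here I must check that substituting $T'$ for $X$ preserves well-formedness and typing, using the constraint $\precise{T'}{\repType{\eType{X}{T}}}$ from (TPack) together with the association $X:T'$ recorded by (TOpen). A canonical-forms argument ensures that a value of type $\stypebot{(\eType{X}{T})}$ is indeed a package $\pack{T'}{v}{\eType{X}{T}}$ with the right shape.

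Next I would prove \textbf{Progress}: if $\typeof{e}{S}$ (closed, empty $\Gamma$) then either $e$ is a value or $\exists e''.\, e\reduce e''$. This is an induction on the typing derivation using \textbf{Canonical Forms} lemmas that pin down the shape of closed values at each type up to subtyping: a closed value of faceted type with left facet $\mtype{S_1}{S_2}$ is a $\lambda$-abstraction, one with a pair/sum/existential left facet is a pair/injection/package, and one with primitive left facet is a primitive constant. Because every subtyping rule (reflexivity, $\twoTypes{T}{U}<:\twoTypes{T}{\top}$, and $\stypebot{T}<:\twoTypes{T}{X}$) preserves the left facet $T$, canonical forms depend only on $T$, which keeps these lemmas clean. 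For each elimination form the canonical-forms result guarantees the corresponding head-reduction rule applies (or an evaluation-context step can be taken on a non-value subterm via the IH).

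I expect the main obstacle to be the interaction between the security-type \emph{stamping} operator $\stamp{\cdot}{\cdot}$ and subject reduction. In the elimination rules the result type is $\stamp{S'}{S}$ rather than $S'$, and I must verify that after a head reduction the naturally-assigned type of the reduct is a subtype of the stamped type; concretely, when an argument flows through $\beta$-reduction and the function was itself secret (so $S$ is not public), the substitution lemma gives a type for the body that must be shown to sit below $\stypetop{T_2}$ — this requires a small lemma that stamping is monotone and idempotent and that $\stamp{S'}{S}$ is an upper bound justified by the subtyping rules. The other delicate point is the \icode{open}/\icode{pack} case, where I must ensure that the well-formedness side-condition $\wf{\Delta}{S'}$ (guaranteeing $X\notin S'$) in (TOpen) makes the type substitution $\ssubst{S'}{T'}{X}$ act as the identity on the result type, so that the type of the reduct remains $S'$ and no abstract variable escapes its scope.
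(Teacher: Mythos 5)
Your proposal is correct and follows the same route the paper takes: the theorem is precisely a Wright--Felleisen \emph{syntactic} type safety result, established via progress and preservation with term- and type-substitution lemmas and canonical forms (the paper relegates these standard developments to its appendix). You also correctly identify the two points specific to \elsec---that preservation must account for the stamping $\stamp{S'}{S}$ via the subtyping rule $\twoTypes{T}{U} <: \twoTypes{T}{\top}$, and that the (TOpen)/(TPack) case hinges on the precision constraint $\precise{T'}{\repType{\eType{X}{T}}}$ together with the side condition $\wf{\Delta}{S'}$ keeping $X$ out of the result type---so nothing is missing.
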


Having formally defined the language \elsec, we move to the main result of this paper, which is to show that the \elsec is sound from a security standpoint, \ie its type system enforces existential relaxed noninterference.

\section{Existential relaxed noninterference, formally}
\label{sec:elsec-sec-model}

In Section~\ref{sec:elsec-erni-overview} we gave an overview of existential relaxed noninterference (ERNI), explaining how it depends on type-based equivalences. To formally capture these type-based equivalences, we define a logical relation, defined by induction on the structure of types. To account for type variables, we build upon prior work on logical relations for parametricity~\cite{ahmed:esop2006,reynolds:83,wadler:fpca89}. Then, we formally define ERNI on top of this logical relation. Finally, we prove that the type system of \elsec enforces existential relaxed noninterference.

\subsection{Logical relation for type-based equivalence}

As explained in Section~\ref{sec:elsec-erni-overview}, two values $v_1$ and $v_2$ are equivalent at type $S$, if they are
in the partial equivalence relation denoted by $S$. To capture this, 
the logical relation (Figure~\ref{fig:elsec-logical-relation}) interprets types as set of {\em atoms}, \ie pairs of closed expressions. 
We use $\eatomtwo{T_1}{T_2}$ 
to characterize the set of atoms with expressions of type $T_1$ and $T_2$ respectively. This definition appeals to a simply-typed judgment
$\stypeof{\Delta;\Gamma}{e}{T}$ that does not consider the \rightFacet type and is therefore completely standard. 
The use of this simple type system clearly separates the {\em definition} of secure programs from the {\em enforcement} mechanism, \ie the security type system of Figure~\ref{fig:elsec-static-semantics}.

In Section~\ref{sec:elsec-erni-overview} we explained what it means to be equivalent at type $\twoTypes{\Int}{X}$ appealing to a relation on integers $R_{X} \subseteq 
\Int \times \Int$ . To formally characterize the set of valid relations $R_{X}$ %
 for types $T_1$ and $T_2$ we use the definition $\erel{T_1}{T_2}$. To keep track of the relation associated to a type variable, most definitions are 
indexed by an environment $\rho$ that maps type variables $X$ to triplets $(T_1,T_2,\XRel)$, where $T_1$ and $T_2$ are  
two representation types of $X$  and $\XRel$ is a relation on closed values of type $T_1$ and $T_2$ (\ie $\rho ::= \emptyset | \xsrho{\rho}{X}{(T_1,T_2,\XRel)}$). We will explain
later where these types $T_1$ and $T_2$ come from.
We write $\rhofst(U)$ (resp. $\rhosnd(U)$) to replace all type variables of $\rho$ in types with the associated type $T_1$ (resp. $T_2$), 
and $\rhosem{X}$ to retrieve the relation $\XRel$ of a type variable $X$ in $\rho$.

Figure~\ref{fig:elsec-logical-relation} defines the 
{\em value interpretation} of a type $T$, denoted $\esetv{T}$, 
then the value interpretation of a security type $S$, denoted $\esetv{S}$, and finally the {\em expression interpretation} of a type $S$, denoted $\gsetc{S}$.

\paragraph{Interpreting concrete types.}
We first explain the definitions that do not involve types variables. 
$\esetv{\twoTypes{T}{\top}}$ (resp. $\esetv{\twoTypes{T}{T}}$) characterizes
when values of $T$ are indistinguishable (resp. distinguishable) for the public observer. 
$\esetv{\twoTypes{T}{\top}}$ is defined as $\eatomtwo{\rhofst(T)}{\rhosnd(T)}$ indicating that any two values of type $T$ are equivalent at type $\twoTypes{T}{\top}$. Note that this also includes values of type $\twoTypes{X}{\top}$.

Two {\em public} values are equivalent at a security type $\twoTypes{T}{T}$ if they are equivalent at their safety type, \ie~$\esetv{\stypebot{T}} = \esetv{T}$.
The definition $\esetv{\primt}$ relates syntactically-equal primitive values at type $\primt$. Two functions are equivalent at type $S_1 -> S_2$, denoted $\esetv{S_1 -> S_2}$, if given equivalent arguments at type $S_1$, 
their applications are equivalent expressions at type $S_2$. Two pairs are equivalent at type $\pairt{S_1}{S_2}$ if they are component-wise equivalent. Two values are equivalent at $\sumt{S_1}{S_2}$ if they are either both left-injected values $\inl{v_1}$ and $\inl{v_2}$ such as $v_1$ and $v_2$ are equivalent at $S_1$, or both right-injected values $\inr{v_1}$ and $\inr{v_2}$ such as $v_1$ and $v_2$ are equivalent at $S_2$.

Finally, two expressions $e_1$ and $e_2$ are equivalent at type $\twoTypes{T}{U}$, denoted $\gsetc{\twoTypes{T}{U}}$, if they both reduce to values $v_1$ and $v_2$ respectively and these values are related at type $\twoTypes{T}{U}$. 
(Note that all well-typed \elsec expressions terminate.)

\begin{figure}[t]
	\begin{small}
	\begin{displaymath}
		\begin{array}{lcl}
			\eatomtwo{T_1}{T_2} & = & \{ (e_1,e_2) |~\stypeof{\bigcdot;\bigcdot}{e_1}{T_1}~\wedge~\stypeof{\bigcdot;\bigcdot}{e_2}{T_2} \}\\
			\eatomunion{T} & = & \eatomtwo{\rhofst(T)}{\rhosnd(T)}\\
			\erel{T_1}{T_2} &  = & \{\XRel \subseteq \eatomtwo{T_1}{T_2}\} \\
			\esetv{\unittype}  &  = & \{(\unitval,\unitval) \in \eatomunion{\unittype} \}\\
			\esetv{\primt}  &  = & \{(\primb,\primb) \in \eatomunion{\primt} \}\\			
			\esetv{S_1->S_2}  &  = &
			\begin{block}
				\{ (v_1, v_2) \in \eatomunion{S_1->S_2} | 
				\\
					\forall v_1',v_2'.~(v_1', v_2') \in \esetv{S_1} \implies 
					(\appe{v_1}{v'_1}, \appe{v_2}{v'_2}) \in \gsetc{S_2}
				\}			
			\end{block} \\			
			\esetv{\pairt{S_1}{S_2}}  &  = &
			\begin{block}
				\{ (\paire{v_1}{v'_1}, \paire{v_2}{v'_2}) \in \eatomunion{\pairt{S_1}{S_2}} |~
				  (v_1,v_2) \in \esetv{S_1}~\wedge~ (v'_1,v'_2) \in \esetv{S_2}
				\}
			\end{block} \\			
			\esetv{\sumt{S_1}{S_2}}  &  = &
			\begin{block}
				\{ (\inl{v_1}, \inl{v_2}) \in \eatomunion{\sumt{S_1}{S_2}} |~
				  (v_1,v_2) \in \esetv{S_1}
				\}
				\\
				\union
				\{ (\inr{v_1}, \inr{v_2}) \in \eatomunion{\sumt{S_1}{S_2}} |~
				  (v_1,v_2) \in \esetv{S_2}
				\}
			\end{block} \\			
			\esetv{\eType{X}{T}}  &  = &
			\begin{block}
				\{ (\pack{T_1}{v_1}{\eType{X}{T}}, \pack{T_2}{v_2}{\eType{X}{T}}) \in \eatomunion{\eType{X}{T}} | 
				\\ \Gbox{\precise{T_1}{\repType{\eType{X}{T}}}~\wedge~\precise{T_2}{\repType{\eType{X}{T}}}} \wedge\\
				\exists \XRel \in \unionrel{T_1,T_2}. 
					\quad (v_1, v_2) \in \gsetvx{T}{\xsrho{\rho}{X}{(T_1,T_2,\XRel)}}
				\}
			\end{block} \\			
			\esetv{X}  &  = & \rhosem{X}\\
			\esetv{\twoTypes{T}{X}}  &  = & \rhosem{X} \union \esetv{T} \\
			\esetv{\twoTypes{T}{\top}}  &  = & \eatomtwo{\rhofst(T)}{\rhosnd(T)}\\
			\esetv{\twoTypes{T}{T}}  &  = & \esetv{T}\\
			\gsetc{\twoTypes{T}{U}} &  =  & 
			\begin{block}			 
				\{ (e_1, e_2) \in \eatomunion{T}|~
					\forall v_1,v_2.~ 
					e_1 \reduce^{*} v_1~\wedge~ ~ e_2 \reduce^{*} v_2~\wedge~
					(v_1,v_2) \in \esetv{\twoTypes{T}{U}}
				\}
			\end{block} \\
    \end{array}		
  \end{displaymath}	
	\end{small}
 \caption{\elsec Logical relation for type-based equivalence}
  \label{fig:elsec-logical-relation}
\end{figure}

\paragraph{Interpreting existential types.}
We now explain the value interpretation of existential types $\eType{X}{T}$, type variables $X$ and security types of the form $\twoTypes{T}{X}$, which all involve type variables.

Two public package expressions $\pack{T_1}{v_1}{\eType{X}{T}}$ and $\pack{T_2}{v_2}{\eType{X}{T}}$ are equivalent at type $\eType{X}{T}$, denoted $\esetv{\eType{X}{T}}$, 
if there exists a relation $\XRel$ on the representation types $T_1$ and $T_2$ that makes the package implementations $v_1$ and $v_2$ 
equivalent at type $T$, denoted $(v_1, v_2) \in \gsetvx{T}{\xsrho{\rho}{X}{(T_1,T_2,\XRel)}}$. Note that if the existential type $\eType{X}{T}$ has a concrete 
safety type $T'$ (not a type variable) for $X$, then both $T_1$ and $T_2$ necessarily have to be equal to $T'$. Otherwise, $T_1$ and $T_2$ are arbitrary types.
Two values are related at type $X$, denoted $\esetv{X}$, if they are in the relational interpretation $\XRel$ associated to $X$ (retrieved with 
$\rhosem{X}$). Two values are related at type $\twoTypes{T}{X}$, denoted $\esetv{\twoTypes{T}{X}}$, if they are in $\rhosem{X}$, or if they are publicly-equivalent 
values of type $T$ (\ie~a package can accept 
public values of type $T$ where values of $\twoTypes{T}{X}$ are expected).

We illustrate these formal type-based equivalences in Section~\ref{sec:elsec-erni-formal-illustration}, after formally defining existential relaxed noninterference and proving security type soundness.

\subsection{Existential relaxed noninterference}

As illustrated in Section~\ref{sec:elsec-erni-overview},
ERNI is a modular property that accounts for open expressions over both variables and type variables. To account for open expressions,
we first need to define the relational interpretation of a type environment $\Gamma$ and a type variable 
environment $\Delta$:
	\begin{small}
	\begin{displaymath}
		\begin{array}{lcl}			
			\gsetg{\cdot} & = & \{(\emptyset,\emptyset)\} \\
			\gsetg{\Gamma;x:S} & = &
				\begin{block}					
					\{(\extenv{\gamma_1}{x}{v_1},\extenv{\gamma_2}{x}{v_2})|~
					(\gamma_1,\gamma_2) \in \gsetg{\Gamma}
					~\wedge
					~ (v_1, v_2) \in \esetv{S}
					\} 
				\end{block} \\
			\gsetd{\cdot} & = & \{\emptyset\} \\
			\gsetd{\Delta;X:T} & = & \{\xsrho{\rho}{X}{(T_1,T_2,\XRel)} | \rho \in \gsetd{\Delta} ~\wedge~ \precise{T_1}{T}~\wedge~\precise{T_2}{T} ~\wedge~ \XRel \in \erel{T_1}{T_2} \}
    \end{array}		
  \end{displaymath}	
	\end{small}

 The type environment interpretation $\gsetg{\Gamma}$ is standard; it characterizes when two value substitutions 
$\gamma_1$ and $\gamma_2$ are equivalent. A value substitution $\gamma$ is a mapping from variables to closed values (\ie $\gamma ::= \emptyset | \extenv{\gamma}{x}{v}$). 
Two value substitutions are equivalent if for all associations $x:S$ in $\Gamma$, the mapped values  to $x$ in $\gamma_1$ and $\gamma_2$ are
equivalent at $S$. Finally, the interpretation of a type variable environment $\Delta$, denoted $\gsetd{\Delta}$, is a set 
of type substitutions $\rho$ with the same domain as $\Delta$. For each type variable $X$ bound to $T$ in $\Delta$, such a $\rho$ maps $X$ to triples $(T_1,T_2,\XRel)$, where $T_1$ and $T_2$ are closed types that are more precise than $T$
. $\XRel$ must be a valid relation for the types $T_1$ and $T_2$. We write $\rhofst(e)$ (resp. $\rhosnd(e)$) 
to replace all type variables of $\rho$ in terms with their associated type $T_1$ (resp. $T_2$), 

We can now formally define ERNI. An expression $e$ 
satisfies existential relaxed noninterference for a type variable environment $\Delta$ and a type variable $\Gamma$ at the $S$, denoted 
$\erni{\Delta}{\Gamma}{e}{S}$ if, given a type substitution $\rho$ satisfying $\Delta$ and two values substitutions 
$\gamma_1$ and $\gamma_2$ that are equivalent at $\Gamma$, applying the substitutions produces equivalent expressions at type $S$.

\begin{definition}[Existential relaxed noninterference]
\label{def:elsec-erni}
\begin{displaymath}			
		\begin{array}{l}
			\erni{\DeltaX}{\Gamma}{e}{S} \Longleftrightarrow \exists T,U.~S \triangleq \twoTypes{T}{U} ~\wedge~ \stypeof{\Delta;\Gamma}{e}{T}~\wedge~ \wf{\DeltaX}{\Gamma}~\wedge~ \wf{\DeltaX}{S}~\wedge \\
			\quad \forall \rho, \gamma_1,\gamma_2.
			~\rho \in \gsetd{\DeltaX}. 
			~ (\gamma_1,\gamma_2) \in \gsetg{\Gamma} \implies (\rhofst(\gamma_1(e)),\rhosnd(\gamma_2(e))) \in \gsetc{S}
		\end{array}
\end{displaymath}
\end{definition}

\subsection{Security type soundness}

Instead of directly proving that the type system of Figure~\ref{fig:elsec-static-semantics} implies existential relaxed noninterference for all well-typed terms, we prove it through the standard definition of logically-related open terms~\cite{cruzAl:ecoop2017}:

\begin{definition}[Logically-related open terms]
\label{def:elsec-expression-equivalence}
\begin{displaymath}
	\begin{array}{l}
		\Delta ;\Gamma |- e_1 \approx e_2 : S   <=> \Delta;\Gamma |- e_i: S~\wedge~\wf{\Delta}{\Gamma}~\wedge~ \wf{\Delta}{S}~\wedge \\
		\quad \forall \rho, \gamma_1,\gamma_2.
		~\rho \in \gsetd{\Delta}. (\gamma_1, \gamma_2) \in \gsetg{\Gamma} \implies (\rhofst(\gamma_1(e_1)), \rhosnd(\gamma_2(e_2)) \in \gsetc{S}
	\end{array}
\end{displaymath}
\end{definition}

The next lemma captures that if an expression is logically related to itself, then it satisfies ERNI.

\begin{restatable}[Self logical relation implies PRNI]{lemma}{fpImpliesERNI}
\label{def:elsec-logapprox-implies-erni}
\mbox{}
\\
$\Delta;\Gamma |- e \approx e : S \implies \erni{\Delta}{\Gamma}{e}{S}$
\end{restatable}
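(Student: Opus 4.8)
The plan is to unfold both definitions side by side and observe that the hypothesis $\Delta;\Gamma \vdash e \approx e : S$ already delivers all but one of the conjuncts of $\erni{\Delta}{\Gamma}{e}{S}$; the only genuinely missing piece is the simple typing judgment on the safety facet, which I would recover through an auxiliary lemma. First I would instantiate Definition~\ref{def:elsec-expression-equivalence} with $e_1 = e_2 = e$. This yields four facts: the security typing $\Delta;\Gamma \vdash e : S$, the well-formedness conditions $\wf{\Delta}{\Gamma}$ and $\wf{\Delta}{S}$, and the closure condition $\forall \rho, \gamma_1, \gamma_2.~\rho \in \gsetd{\Delta},~(\gamma_1,\gamma_2) \in \gsetg{\Gamma} \implies (\rhofst(\gamma_1(e)),\rhosnd(\gamma_2(e))) \in \gsetc{S}$. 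Comparing with Definition~\ref{def:elsec-erni}, the two well-formedness conditions and the closure condition are syntactically identical to conjuncts of ERNI, so they transfer verbatim. The decomposition $\exists T,U.~S \triangleq \twoTypes{T}{U}$ is immediate from the grammar of security types in Figure~\ref{fig:elsec-syntax}, where every $S$ has exactly this shape; I would fix this $T$ and $U$.

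It then remains only to establish the simple typing $\stypeof{\Delta;\Gamma}{e}{T}$, where $T$ is the safety facet just fixed. I would derive this from the security typing $\Delta;\Gamma \vdash e : \twoTypes{T}{U}$ via an auxiliary lemma asserting that the security type system refines the simple type system on safety facets: if $\Delta;\Gamma \vdash e : \twoTypes{T}{U}$ then $\stypeof{\Delta;\Gamma}{e}{T}$. This lemma is proved by induction on the security typing derivation of Figure~\ref{fig:elsec-static-semantics}. The essential observation is that the stamping operation appearing in every elimination rule, (TApp), (TOp), (TFst), (TSnd), (TCase), and (TOpen), only ever rewrites the \rightFacet type: by its definition $\stamp{\twoTypes{T_1}{U_1}}{S_2}$ equals either $\twoTypes{T_1}{U_1}$ or $\twoTypes{T_1}{\top}$, both with safety facet $T_1$, so the safety facet of each conclusion coincides with the type the corresponding standard rule assigns. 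The introduction rules (TP), (TFun), (TPair), (TU), (TInl), (TInr), and (TPack) all conclude at a type of the form $\stypebot{T}$, whose safety facet is precisely the standard type. Finally, the subsumption rule (TS) is discharged by noting that all three subtyping axioms---reflexivity, $\twoTypes{T}{U} <: \twoTypes{T}{\top}$, and $\stypebot{T} <: \twoTypes{T}{X}$---leave the safety facet $T$ unchanged.

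With this lemma in hand, $\stypeof{\Delta;\Gamma}{e}{T}$ follows at once, and together with the conjuncts transferred directly from the hypothesis this establishes every component of $\erni{\Delta}{\Gamma}{e}{S}$. I expect the main obstacle to be the auxiliary refinement lemma, since it is the only step requiring an induction and a careful verification that both the stamping operation and the subtyping axioms are inert on safety facets; every other step is a direct rewriting between the two definitions.
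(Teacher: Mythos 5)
Your proposal is correct and takes essentially the intended route: the lemma is a direct unfolding of Definition~\ref{def:elsec-expression-equivalence} against Definition~\ref{def:elsec-erni}, with the well-formedness and closure conjuncts transferring verbatim and the only genuine obligation being the simple typing $\stypeof{\Delta;\Gamma}{e}{T}$, which, as you observe, follows from the security typing because the subtyping axioms and the stamping operation $\stamp{S'}{S}$ never alter the safety facet. Your identification of that safety-facet refinement lemma as the single inductive step is exactly the content the paper treats as immediate.
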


The proof of security type soundness relies on the Fundamental Property of the logical relation: a well-typed \elsec term
is related to itself.

\begin{restatable}[Fundamental property]{theorem}{fperni}
\label{lm:elsec-fp}
$\Delta;\Gamma \vdash e : S \implies \Delta;\Gamma |- e \approx e : S$
\end{restatable}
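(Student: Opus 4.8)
The plan is to proceed by induction on the derivation of $\Delta;\Gamma \vdash e : S$, proving in each case the corresponding \emph{compatibility lemma}: assuming the subterms are self-related (the induction hypotheses), the compound term is self-related. After unfolding Definition~\ref{def:elsec-expression-equivalence}, the goal in every case is, for an arbitrary fixed $\rho \in \gsetd{\Delta}$ and $(\gamma_1,\gamma_2) \in \gsetg{\Gamma}$, to show $(\rhofst(\gamma_1(e)), \rhosnd(\gamma_2(e))) \in \gsetc{S}$; since all well-typed \elsec terms terminate, this reduces to running both sides to values and placing the resulting value pair in $\esetv{S}$. The well-formedness side conditions $\wf{\Delta}{\Gamma}$ and $\wf{\Delta}{S}$ are threaded through syntactically from the typing rules.

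The introduction cases (TP), (TU), (TFun), (TPair), (TInl), (TInr) are routine: each substituted term already is, or reduces component-wise to, a value, and relatedness follows from the induction hypotheses and the matching clause of $\esetv{\cdot}$ (for (TFun) one first substitutes a pair of related arguments drawn from $\esetv{S}$ into the body). Case (TVar) is immediate from $(\gamma_1,\gamma_2)\in\gsetg{\Gamma}$. Case (TS) requires a \textbf{subtyping-soundness lemma}, $S' <: S \implies \esetv{S'} \subseteq \esetv{S}$ (hence $\gsetc{S'} \subseteq \gsetc{S}$), checked directly against Figure~\ref{fig:elsec-logical-relation}: reflexivity is trivial; $\twoTypes{T}{U} <: \twoTypes{T}{\top}$ holds because $\esetv{\twoTypes{T}{\top}} = \eatomtwo{\rhofst(T)}{\rhosnd(T)}$ is the total relation on well-typed atoms; and $\stypebot{T} <: \twoTypes{T}{X}$ holds because $\esetv{\twoTypes{T}{X}} = \rhosem{X} \cup \esetv{T}$ already contains $\esetv{\stypebot{T}} = \esetv{T}$.

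The elimination cases (TApp), (TOp), (TFst), (TSnd), (TCase), (TOpen) all hinge on a \textbf{stamping-soundness lemma} for $\stamp{S'}{S}$, resting on a dichotomy over the scrutinee type $S$. If $S$ is public (its two facets coincide), the induction hypothesis delivers genuinely related eliminees (\eg two functions related in $\esetv{S_1 \rightarrow S_2}$, or two related pairs/sums/packages), and the standard elimination argument produces related results at $S'$, which equals $\stamp{S'}{S}$ here. If $S$ is \emph{not} public, then $\stamp{S'}{S}$ carries $\top$ in its \rightFacet position, so its interpretation is the total relation $\eatomtwo{\cdot}{\cdot}$, and it suffices to observe that both eliminations reduce to well-typed values of the right safety type---exactly preservation together with the termination guarantee underlying Theorem~\ref{the:gobsec-type-safety}---so no behavioural relatedness of the secret eliminee is needed. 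This is the step that discharges the declassification discipline, and it also covers (TOpen) when the package is secret, where canonical forms (from type safety) still guarantee both sides reduce to packages that can be opened.

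The existential cases carry the real difficulty and rest on a \textbf{type-substitution (compositionality) lemma} of the form $\esetvx{\ssubst{T}{T'}{X}}{\rho} = \esetvx{T}{\xsrho{\rho}{X}{(\rhofst(T'),\rhosnd(T'),\esetv{T'})}}$, proved by induction on $T$ while relating the precision condition $\precise{T'}{\repType{\eType{X}{T}}}$ to the representation-type bookkeeping stored in $\rho$. For (TPack) each side reduces to $\pack{T'}{v_i}{\eType{X}{T}}$; choosing the witness $\XRel = \esetv{T'}$ and applying this lemma to the induction hypothesis $(v_1,v_2)\in\esetv{\ssubst{T}{T'}{X}}$ discharges the existential clause of $\esetv{\eType{X}{T}}$. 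For the public case of (TOpen), the induction hypothesis on the package yields representation types $T_1,T_2$ (both $\sqsubseteq \repType{\eType{X}{T}}$) and a witness $\XRel$ relating the implementations at $T$; extending the environments to $\xsrho{\rho}{X}{(T_1,T_2,\XRel)} \in \gsetd{\Delta,X:T'}$ and $\gamma_i$ by $x\mapsto v_i$, the induction hypothesis on the body $e'$ closes the case, using $\wf{\Delta}{S'}$ (so $X\notin S'$) to justify that substituting $T_i$ for $X$ leaves $S'$ and its interpretation unchanged. I expect this compositionality lemma---and in particular its interaction with the precision relation and the abstract choice of $\XRel$ in (TPack)/(TOpen)---to be the principal obstacle; the elimination cases are the conceptual security core but follow a uniform pattern once stamping soundness is available.
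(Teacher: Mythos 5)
Your proposal follows essentially the same route as the paper, whose entire proof is the two-line sketch you reconstruct: induction on the typing derivation, discharged by one compatibility lemma per typing rule in the style of Ahmed (ESOP 2006). The auxiliary lemmas you spell out---subtyping soundness for (TS), stamping soundness built on the public/secret dichotomy for the elimination rules, and the type-substitution (compositionality) lemma with the precision bookkeeping for (TPack)/(TOpen)---are precisely the standard ingredients those compatibility lemmas rely on, so there is nothing to fault.
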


Security type soundness follows from Lemma~\ref{def:elsec-logapprox-implies-erni} and Theorem~\ref{lm:elsec-fp}.

\begin{restatable}[Security type soundness]{theorem}{secTypeSoundness}
\label{lm:elsec-security-type-soundness}
$\Delta;\Gamma \vdash e : S \implies \erni{\Delta}{\Gamma}{e}{S}$
\end{restatable}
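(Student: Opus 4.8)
The plan is straightforward: this statement is an immediate corollary of the two results established just above it, and indeed the surrounding text announces precisely this. Given a derivation of $\Delta;\Gamma \vdash e : S$, I would first invoke the Fundamental Property (Theorem~\ref{lm:elsec-fp}) to obtain $\Delta;\Gamma |- e \approx e : S$, i.e.\ that $e$ is logically related to itself at $S$. Then I would apply the lemma that self-logical-relatedness implies ERNI (Lemma~\ref{def:elsec-logapprox-implies-erni}) to conclude $\erni{\Delta}{\Gamma}{e}{S}$. Composing these two implications discharges the goal directly; there is no induction or case analysis to perform at this level.

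The only thing worth checking is that the hypotheses of the two invoked results are all met, and that the auxiliary side-conditions line up. The Fundamental Property takes exactly the typing judgment $\Delta;\Gamma \vdash e : S$ as its hypothesis, so it applies verbatim. Its conclusion, the definition of logically-related open terms, bundles in the conditions that $e$ is well-typed under the simple type system, together with $\wf{\Delta}{\Gamma}$ and $\wf{\Delta}{S}$; these are precisely the side-conditions demanded by the definition of $\erni{\Delta}{\Gamma}{e}{S}$, so they propagate through the second step without extra work. Hence nothing is lost in the composition, and the universally-quantified body over $\rho$, $\gamma_1$, $\gamma_2$ is carried across unchanged since both definitions phrase it identically in terms of $\gsetc{S}$.

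Because the real mathematical content is deferred, I would note where the genuine difficulty resides: it is entirely inside the Fundamental Property, whose proof proceeds by induction on the typing derivation of $e$ and must, for each typing rule---especially (TPack), (TOpen), and the stamping $\stamp{S'}{S}$ in the elimination rules---show that the relational interpretation of Figure~\ref{fig:elsec-logical-relation} is respected. The existential cases are the crux: (TPack) requires exhibiting a suitable relation $\XRel$ on representation types, while (TOpen) requires reasoning uniformly over an arbitrary such relation, in the style of parametricity. At the level of the present theorem, however, all of that is assumed, so the main obstacle has already been handled elsewhere, and the proof here is a one-line chaining of Theorem~\ref{lm:elsec-fp} into Lemma~\ref{def:elsec-logapprox-implies-erni}.
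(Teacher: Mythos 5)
Your proposal is correct and matches the paper's own route exactly: the paper states that security type soundness follows by composing the Fundamental Property (Theorem~\ref{lm:elsec-fp}) with Lemma~\ref{def:elsec-logapprox-implies-erni}, with all the substantive work (the induction on typing derivations via per-rule compatibility lemmas, in the style of Ahmed) living inside the Fundamental Property, just as you observe. One tiny quibble: Definition~\ref{def:elsec-expression-equivalence} bundles the \emph{security} typing judgment $\Delta;\Gamma \vdash e : S$ rather than the simple judgment $\stypeof{\Delta;\Gamma}{e}{T}$ required by ERNI, so the passage from the former to the latter is part of what Lemma~\ref{def:elsec-logapprox-implies-erni} silently discharges---but this does not affect the validity of your composition.
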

\begin{proof}
By induction on the typing derivation of $e$. 
Following \citet{ahmed:esop2006}, we define a compatibility lemma for each typing rule; then, each case of the induction directly follows from the corresponding compatibility lemma.
\end{proof}

\section{Illustration}
\label{sec:elsec-erni-formal-illustration}

With all the formal definitions at hand, we end by revisiting the informal example of Section~\ref{sec:elsec-erni-overview}.
$\SalaryPolicy$ can be encoded in \elsec as follow : 
$\eType{X}{\pairt{\twoTypes{\Int}{X}}{(\twoTypes{\Int}{X}-> \stypebot{\Bool})}}$. 
Let us show that the following two packages are equivalent at type $\stypebot{\SalaryPolicy}$ (\textsf{gte} is a curried comparison function, and we omit the \textsf{as}):
$$p_1~\defas~\packSimple{\Int}{\paire{\mathsf{100001}}{\mathsf{gte(100000)}}} \qquad
		p_2~\defas~\packSimple{\Int}{\paire{\mathsf{100002}}{\mathsf{gte(100000)}}} \\
	$$	
The definition of $\esetvx{\SalaryPolicy}{\emptyset}$
requires picking a relation $\XRel \in \erel{\Int}{\Int}$.
Pick $\XRel = \{(\mathsf{100001},\mathsf{100002})\}$. Then apply the rest of the definitions to verify that the package 
implementations are equivalent at 
$\esetvx{\pairt{\twoTypes{\Int}{X}}{(\twoTypes{\Int}{X} -> \stypebot{\Bool})}}{\xsrho{\emptyset}{X}{(\Int,\Int,\XRel)}}$. 
Use the $\rhosem{X}$ part of the definition of \\*
$\esetvx{\twoTypes{\Int}{X}}{\xsrho{\emptyset}{X}{(\Int,\Int,\XRel)}}$  to show that the first components 
$\mathsf{100001}$ and $\mathsf{100002}$  are
equivalent at $\twoTypes{\Int}{X}$, \ie $(\mathsf{100001},\mathsf{100002}) \in \rhosem{X}$.

 First we illustrate the formal reasoning that we obtain from Theorem~\ref{lm:elsec-security-type-soundness}. Let us pose $\Delta = X:\Int$ 
 and $\Gamma = x: \stypebot{(\pairt{\twoTypes{\Int}{X}}{(\twoTypes{\Int}{X}-> \stypebot{\Bool})})}$. The program 
 $e=  \appe{(\pSnd{x})}{(\pFst{x})}$ has type $\stypebot{\Bool}$, therefore, by Theorem~\ref{lm:elsec-security-type-soundness}, $\erni{\Delta}{\Gamma}{e}{\stypebot{\Bool}}$ holds---$e$ is secure at $\stypebot{\Bool}$. We can verify this formally.
 By Definition~\ref{def:elsec-erni} we have to assume an arbitrary type substitution $\rho \in \esetd{X:\Int}$ and two values substitutions 
 $(\gamma_1,\gamma_2) \in \esetg{x:\stypebot{(\pairt{\twoTypes{\Int}{X}}{(\twoTypes{\Int}{X}-> \stypebot{\Bool})})}}$ and to show 
 that $(\rho_1(\gamma_1(\appe{(\pSnd{x})}{(\pFst{x})})),\rho_2(\gamma_2(\appe{(\pSnd{x})}{(\pFst{x})}))) \in \esetc{\stypebot{\Bool}}$. From $(\gamma_1,\gamma_2) \in \esetg{x:\stypebot{(\pairt{\twoTypes{\Int}{X}}{(\twoTypes{\Int}{X}-> \stypebot{\Bool})})}}$ we know that $\gamma_1 = x \mapsto v_1$ and 
 $\gamma_1 = x \mapsto v_2$, such as $(v_1,v_2) \in \esetv{\stypebot{(\pairt{\twoTypes{\Int}{X}}{(\twoTypes{\Int}{X}-> \stypebot{\Bool})})}}$. 
 Then $\appe{(\pSnd{v_1})}{(\pFst{v_1})} \reduce^{*} \appe{v'_{11}}{v'_{12}}$ and 
 $\appe{(\pSnd{v_2})}{(\pFst{v_2})} \reduce^{*} \appe{v'_{21}}{v'_{22}}$, such that 
 $(v'_{11},v'_{21}) \in \esetvx{\twoTypes{\Int}{X}-> \stypebot{\Bool}}{\rho}$ and $(v'_{12},v'_{22}) \in 
 \esetvx{\twoTypes{\Int}{X}}{\rho}$. After the previous reductions we have to verify that $(\appe{v'_{11}}{v'_{12}},\appe{v'_{21}}{v'_{22}}) \in \esetc{\stypebot{\Bool}}$.
 At this point, instantiate $(v'_{11},v'_{21}) \in 
 \esetvx{\twoTypes{\Int}{X}-> \stypebot{\Bool}}{\rho}$ with $(v'_{12},v'_{22}) \in 
 \esetvx{\twoTypes{\Int}{X}}{\rho}$ to obtain $(\appe{v'_{11}}{v'_{12}},\appe{v'_{21}}{v'_{22}}) \in \esetc{\stypebot{\Bool}}$.

 Second, we formally show why $\erni{\Delta}{\Gamma}{(\pFst{x}) \% 2}{\stypebot{\Int}}$ does not hold. 
 Instantiate Definition~\ref{def:elsec-erni} with $\rho = X \mapsto (\Int,\Int,R)$ and $\gamma_1 = x \mapsto \paire{\mathsf{100001}}{\mathsf{gte(100000)}}$ and 
 $\gamma_2 = x \mapsto \paire{\mathsf{100002}}{\mathsf{gte(100000)}}$. 
 Note that $\rho =  X \mapsto (\Int,\Int,R) \in \esetd{X:\Int}$ and $(\gamma_1,\gamma_2) \in \esetg{\Gamma}$.
 To show $(\rho_1(\gamma_1((\pFst{x}) \% 2)),\rho_2(\gamma_2((\pFst{x}) \% 2))) \in \esetc{\stypebot{\Int}}$ requires showing 
 $(\mathsf{100001 \% 2},\mathsf{100002 \% 2}) \in \esetc{\stypebot{\Int}}$, which means showing $(1,0) \in \esetc{\stypebot{\Int}}$---which is false. Similarly, we can verify that  $\erni{\Delta}{\Gamma}{\mathsf{x.salary}}{\stypebot{\Int}}$ does not hold, which means that a declassifiable secret cannot be directly observed by a public observer.

\section{Related Work}
\label{sec:elsec-related-work}
%

We have already extensively discussed the relation to the original formulation 
of the labels-as-types approach in an object-oriented setting~\cite{cruzAl:ecoop2017}, itself inspired by the work on declassification policies (labels-as-functions) of \citet{liZdancewic:popl2005}.
Formulating type-based declassification with existential types shows how to exploit another type abstraction mechanism that is found in non-object-oriented languages, with abstract data types and modules. Also, existential types support extrinsic 
declassification policies, which are not expressible in the receiver-centric approach of objects.
For instance, the $\AccountStore$ example of Section~\ref{sec:elsec-overview} is not supported by design in the object-oriented approach.

The extrinsic declassification policies supported by our approach are closely related to {\em trusted declassification}~\cite{hicksAl:plas2006}, 
where declassification is globally defined, associating principals that own  secrets with trusted (external) methods that can declassify these secrets. 
In our approach, the relation between secrets and declassifiers is not globally defined, but is local to an existential type and its usage. 
In both approaches the implementations of declassifiers have a privileged view of the secrets.

\citet{bowmanAhmed:icfp2015} present a translation of noninterference into parametricity with a compiler from the Dependency Core Calculus (DCC)~\cite{abadiAl:popl99} to System~F$\omega$. In a recent (as yet unpublished) article, \citet{ngoAl:arXiv2019} extend this work to support translating declassification policies, inspired by prior work on type-based relaxed noninterference~\cite{cruzAl:ecoop2017}. They first provide a translation into abstract types 
of the polymorphic lambda calculus~\cite{reynolds:83}, and then into signatures of a module calculus~\cite{crary:popl2017}. 
While that work and ours encode declassification policies via existential types (module signatures), we focus on providing a surface language for information flow control with type-based declassification. In particular, their translated programs do not support computing with secrets, which is enabled in both this work and the original work of \citet{cruzAl:ecoop2017} thanks to faceted types. Additionally, they only model first-order secrets (integers), while our modular reasoning principle seamlessly accommodates higher-order secrets.

In another very recent piece of work, \citet{cruzTanter:secdev2019} extend the object-oriented approach to type-based relaxed noninterference with 
parametric polymorphism, thereby supporting polymorphic declassification policies. Polymorphic declassification for object types 
is achieved with type variables at the method signature level, which supports the specification of polymorphic policies of the form $\stype{T}{X}$.
Existential types are closely related to universal types.  
In particular, the client of a package that exports a type variable $X$ must 
be polymorphic with respect to $X$; hence our work supports a form of declassification polymorphism in the client code. It would be interesting to extend \elsec with universal types in order to study the interaction of both abstraction mechanisms in a standard functional setting. 
Finally, because of the receiver-centric perspective of objects, they have to resort to ad-hoc polymorphism to properly account for primitive types. Here, 
primitive types do not require any special treatment for declassification polymorphism, because of our extrinsic approach to declassification.

The idea of using the abstraction mechanism of modules to express a form of declassification can also be found in the work of~\citet{nanevskiAl:sp2011} on 
Relational Hoare Type Theory (RHTT). RHTT is formulated with a monadic security type constructor $\mathsf{STsec}~A (p,q)$, where $p$ is a pre-condition on the heap, and $q$ is a post-condition relating output values, input heaps and output heaps. Thanks to the expressive power of the underlying dependent type theory, preconditions and postconditions can characterize very precise declassification policies. The price to pay is that proofs of noninterference have to be provided explicitly as proof terms (or discharged via tactics or other means when possible), while our less expressive approach is a simple, non-dependent type system. Finding the right balance between the expressiveness and the complexity of the typing discipline to express security policies is an active subject of research.

\section{Conclusion}
\label{sec:elsec-conclusion}
We present a novel approach to type-based relaxed noninterference, based on existential types as the underlying type abstraction mechanism. In contrast to the object-oriented, subtyping-based approach, the existential approach naturally supports external declassification policies. This work shows that the general approach of faceted security types for expressive declassification can be applied in non-object-oriented languages that support abstract data types or modules. As such, it represents a step towards providing a practical realization of information-flow security typing that
accounts for controlled and expressive declassification with a modular reasoning principle about security.

An immediate venue for future work that would be crucial in practice is to develop type inference for declassification types, which should reduce to standard type inference~\cite{damasMilner:popl1982}. Finally, a particularly interesting perspective is to study the combination of the existential approach with the object-oriented approach, thereby bridging the gap towards a practical implementation in a full-fledged programming language like Scala that features all these type abstraction mechanisms.

{\small 
\renewcommand{\bibsection}{\section{References}}
\bibliographystyle{splncsnat}


\bibliography{extracted}

\begin{thebibliography}{18}
\providecommand{\natexlab}[1]{#1}
\providecommand{\url}[1]{\texttt{#1}}
\providecommand{\urlprefix}{}

\bibitem[{Abadi et~al.(1999)Abadi, Banerjee, Heintze, and
  Riecke}]{abadiAl:popl99}
Abadi, M., Banerjee, A., Heintze, N., Riecke, J.G.: A core calculus of
  dependency.
\newblock In: Proceedings of the 26th {ACM} Symposium on Principles of
  Programming Languages (POPL 99). pp. 147--160. ACM Press, San Antonio, TX,
  USA (Jan 1999)

\bibitem[{Ahmed(2006)}]{ahmed:esop2006}
Ahmed, A.: Step-indexed syntactic logical relations for recursive and
  quantified types.
\newblock In: Sestoft, P. (ed.) Proceedings of the 15th European Symposium on
  Programming Languages and Systems (ESOP 2006). Lecture Notes in Computer
  Science, vol. 3924, pp. 69--83. Springer-Verlag, Vienna, Austria (Mar 2006)

\bibitem[{Bowman and Ahmed(2015)}]{bowmanAhmed:icfp2015}
Bowman, W.J., Ahmed, A.: Noninterference for free.
\newblock In: Proceedings of the 20th ACM SIGPLAN Conference on Functional
  Programming (ICFP 2015). pp. 101--113. ACM Press, Vancouver, Canada (Aug
  2015)

\bibitem[{Crary(2017)}]{crary:popl2017}
Crary, K.: Modules, abstraction, and parametric polymorphism.
\newblock In: Proceedings of the 44th {ACM SIGPLAN-SIGACT} Symposium on
  Principles of Programming Languages (POPL 2017). pp. 100--113. ACM Press,
  Paris, France (Jan 2017)

\bibitem[{Cruz et~al.(2017)Cruz, Rezk, Serpette, and Tanter}]{cruzAl:ecoop2017}
Cruz, R., Rezk, T., Serpette, B., Tanter, {\'E}.: Type abstraction for relaxed
  noninterference.
\newblock In: M\"uller, P. (ed.) Proceedings of the 31st European Conference on
  Object-Oriented Programming (ECOOP 2017). Leibniz International Proceedings
  in Informatics (LIPIcs), vol.~74, pp. 7:1--7:27. Schloss
  Dagstuhl--Leibniz-Zentrum fuer Informatik, Barcelona, Spain (Jun 2017)

\bibitem[{Cruz and Tanter(2019)}]{cruzTanter:secdev2019}
Cruz, R., Tanter, {\'E}.: Polymorphic relaxed noninterference.
\newblock In: Proceedings of the {IEEE} Secure Development Conference (SecDev
  2019). IEEE Computer Society Press, McLean, VA, USA (Sep 2019), to appear

\bibitem[{Damas and Milner(1982)}]{damasMilner:popl1982}
Damas, L., Milner, R.: Principal type-schemes for functional programs.
\newblock In: DeMillo, R.A. (ed.) Proceedings of the 16th {ACM} Symposium on
  Principles of Programming Languages (POPL 89). pp. 207--212. ACM Press,
  Albuquerque, New Mexico, USA (Jan 1982)

\bibitem[{Hicks et~al.(2006)Hicks, King, McDaniel, and
  Hicks}]{hicksAl:plas2006}
Hicks, B., King, D., McDaniel, P., Hicks, M.: Trusted declassification:
  high-level policy for a security-typed language.
\newblock In: Proceedings of the workshop on Programming Languages and Analysis
  for Security (PLAS 2006. pp. 65--74 (2006)

\bibitem[{Li and Zdancewic(2005)}]{liZdancewic:popl2005}
Li, P., Zdancewic, S.: Downgrading policies and relaxed noninterference.
\newblock In: Proceedings of the 32nd {ACM SIGPLAN-SIGACT} Symposium on
  Principles of Programming Languages (POPL 2005). pp. 158--170. ACM Press,
  Long Beach, CA, USA (Jan 2005)

\bibitem[{Mitchell and Plotkin(1988)}]{mitchellPlotkin:toplas1888}
Mitchell, J.C., Plotkin, G.D.: Abstract types have existential type.
\newblock ACM Transactions on Programming Languages and Systems 10(3), 470--502
  (Jul 1988)

\bibitem[{Myers(accessed March 2019)}]{myers:jif}
Myers, A.C.: Jif homepage.
\newblock http://www.cs.cornell.edu/jif/ (accessed March 2019)

\bibitem[{Nanevski et~al.(2011)Nanevski, Banerjee, and
  Garg}]{nanevskiAl:sp2011}
Nanevski, A., Banerjee, A., Garg, D.: Verification of information flow and
  access control policies with dependent types.
\newblock In: Proceedings of the 32nd {IEEE} Symposium on Security and Privacy
  ({S\&P} 2011). pp. 165--179. IEEE Computer Society Press, Berkeley,
  California, {USA} (May 2011)

\bibitem[{Ngo et~al.(2019)Ngo, Naumann, and Rezk}]{ngoAl:arXiv2019}
Ngo, M., Naumann, D.A., Rezk, T.: Typed-based relaxed noninterference for free.
\newblock CoRR abs/1905.00922 (2019),
  \urlprefix\url{https://arxiv.org/abs/1905.00922}

\bibitem[{Pierce(2002)}]{pierce:tapl}
Pierce, B.C.: Types and programming languages.
\newblock MIT Press, Cambridge, MA, USA (2002)

\bibitem[{Reynolds(1983)}]{reynolds:83}
Reynolds, J.C.: Types, abstraction, and parametric polymorphism.
\newblock In: Mason, R.E.A. (ed.) Information Processing 83. pp. 513--523.
  Elsevier (1983)

\bibitem[{Sabelfeld and Sands(2009)}]{sabelfeldSands:jcs2009}
Sabelfeld, A., Sands, D.: Declassification: Dimensions and principles.
\newblock Journal of Computer Security 17(5), 517--548 (2009)

\bibitem[{Wadler(1989)}]{wadler:fpca89}
Wadler, P.: Theorems for free!
\newblock In: Proceedings of the Fourth International Conference on Functional
  Programming Languages and Computer Architecture. pp. 347--359. FPCA '89, ACM,
  London, United Kingdom (1989)

\bibitem[{Zdancewic(2002)}]{zdancewic}
Zdancewic, S.: Programming Languages for Information Security.
\newblock Ph.D. thesis, Cornell University (Aug 2002)

\end{thebibliography}


\end{document}